\numberwithin{equation}{section}
\newtheorem{Theorem}{Theorem}[section]
\newtheorem*{Theorem*}{Theorem}
\newtheorem{Lemma}[Theorem]{Lemma}
\newtheorem{Proposition}[Theorem]{Proposition}
 { \theoremstyle{definition}
\newtheorem{Definition}[Theorem]{Definition}

\newtheorem{Remark}[Theorem]{Remark} }
\def\p{\partial}
\def\cl{{\cal L}}
\def\<{\langle}
\def\>{\rangle}
\def\cO{\mathcal{O}}
\def\be{\begin{equation}}
\def\ee{\end{equation}}
\def\beq{\be\begin{array}{c}}
\def\eeq{\end{array}\ee}
\def\bes{\be\begin{split}}
\def\ees{\end{split} \ee}
\def\bs{\begin{split}}
\def\es{\end{split} }
\def\nn{\nonumber}
\def\g{{ \gamma}}
\def\ve{{\varepsilon}}
\def\G{\Gamma}
  \let\over=\@@over \let\overwithdelims=\@@overwithdelims
  \let\atop=\@@atop \let\atopwithdelims=\@@atopwithdelims
  \let\above=\@@above \let\abovewithdelims=\@@abovewithdelims
\renewcommand\section{\@startsection {section}{1}{\z@}%
                                   {-3.5ex \@plus -1ex \@minus -.2ex}
                                   {2.3ex \@plus.2ex}%
                                   {\normalfont\large\bfseries}}
\renewcommand\subsection{\@startsection{subsection}{2}{\z@}%
                                     {-3.25ex\@plus -1ex \@minus -.2ex}%
                                     {1.5ex \@plus .2ex}%
                                     {\normalfont\bfseries}}
\begin{document}
\begin{titlepage}
\unitlength = 1mm

\vskip 1cm
\begin{center}
{ \LARGE {\textsc{Chern-Gauss-Bonnet theorem via BV localization}}}

\vspace{0.8cm}

\vspace{1cm}
 Vyacheslav Lysov\\
 \vspace{0.3cm}
 {\it London Institute for Mathematical Sciences, \\
 Royal Institution,  21 Albemarle St, London W1S 4BS, UK}\\
  \vspace{0.3cm}

\vspace{0.8cm}

\begin{abstract}
We present a new proof for the Chern-Gauss-Bonnet theorem. We represent the Euler class integral as the partition function for zero-dimensional field theory with on-shell supersymmetry. 
We rewrite the supersymmetric partition function as a BV integral and deform the Lagrangian submanifold. The new Lagrangian submanifold localizes the BV integral to the critical points of the Morse function.
\end{abstract}

\end{center}

\end{titlepage}

\pagestyle{empty}
\pagestyle{plain}

\pagenumbering{arabic}

\tableofcontents

\section{Introduction}

Chern-Gauss-Bonnet theorem \cite{chern1945curvatura} is an equality between the curvature tensor integral and the Euler characteristics of the compact manifold. Among many different proofs for this theorem, there are proofs that rely on supersymmetry. The distinct feature of such proofs \cite{witten1982supersymmetry,alvarez1983supersymmetry} is that they serve as a simple example for various connections between the differential geometry and quantum field theory (QFT).

The key observation in \cite{witten1982supersymmetry,alvarez1983supersymmetry} was that the Hodge-Dirac operator emerges as an action of supersymmeties on a Hilbert space of the $\mathcal{N}=2$ supersymmetric quantum mechanics (1-dimensional QFT). The index of the Hodge-Dirac operator, which equals the Euler characteristic, can be expressed via the twisted partition function in supersymmetric quantum mechanics. The partition function is an integral over the loop space, known as the path integral in quantum mechanics. The key feature of the $\mathcal{N}=2$ supersymmetric quantum mechanics is the four-fermion term with the Riemann tensor coupling.  
The supersymmetry allows for a deformation of quantum mechanics by a superpotential, given by a Morse function, without changing the partition function. Hence, in the limit of a large deformation, the partition function simplifies to the weighted sum over the critical point of the Morse function. 

The success of the quantum mechanical methods for the proof of Chern-Gauss-Bonnet theorem inspired similar constructions for the zero-dimensional supersymmetric QFTs, see \cite{Cordes:1994fc, Szabo:1996md,berwick2015chern,pestun2017introduction}. The crucial difference is that the path integral for zero-dimensional supersymmetric QFT is just an ordinary finite-dimensional integral over the supermanifold. 

We describe a new proof of the Chern-Gauss-Bonnet theorem based on zero-dimensional supersymmetric QFT. The main difference from existing proofs is that we use an on-shell supersymmetric theory. The supersymmetry algebra for such theory closes modulo equations of motion. The advantage of the on-shell theory is that the corresponding QFT contains only variables that are relevant to a problem. The even variables are coordinates on a manifold, while the odd variables are responsible for the integral representation of the Pfaffian and the differential form integration. The disadvantage is that we need to generalize the supersymmetric localization to the case of on-shell supersymmetry. 

In our work \cite{Losev:2023gsq} we used the Batalin-Vilkovisky (BV) formalism \cite{Batalin:1981jr, Batalin:1983ggl} to generalize the supersymmetric localization to a quadratic class on-shell supersymmetric models. We show that the zero-dimensional supersymmetric QFT for the Euler class integral belongs to the quadratic class and we can apply the BV localization. The key feature of the BV localization is the ability to deform the Lagrangian submanifold without changing the value of the integral. We use the Lagrangian submanifold, constructed from the Morse function, to localize the integral to the weighted sum over critical points. The sum, according to the Morse theorem, equals the Euler characteristic. 

Our proof provides a non-trivial consistency test for the novel BV localization method proposed in \cite{Losev:2023gsq}. Furthermore, there is a potential to modify the proof so that the four fermion partition function will emerge as the BV localization to a non-linear Lagrangian submanifold.

\section{Chern-Gauss-Bonnet theorem}

The Euler characteristic $\chi(X)$ for a smooth manifold $X$ is defined in terms of its de Rham cohomology via
\be
\chi(X) = \sum_{k=0}^{\dim X} (-1)^{k} \dim H_{dR}^k(X).  
\ee
Let us recall the familiar theorem relating two-dimensional topology and metric curvature.
\begin{Theorem}{\bf (Gauss-Bonnet)} Let $\Sigma$ be two-dimensional compact, closed manifold with Riemann metric $g$, then the integral of the Ricci scalar $R$ is related to the Euler characteristics $\chi(\Sigma)$ of the surface $\Sigma$ via
\be\label{GB_formula}
\frac{1}{2\pi}\int_\Sigma\; R \sqrt{g} = \chi(\Sigma).
\ee
\end{Theorem}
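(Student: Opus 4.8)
The plan is to recognize the integrand in \eqref{GB_formula} as the Euler form of the tangent bundle. In two dimensions the Levi-Civita curvature is determined by a single function, the Gaussian curvature $K$, and $\tfrac{1}{2\pi}R\sqrt{g}\,d^2x$ is, in the normalization used here, a representative of the Euler class $e(T\Sigma)\in H^2_{dR}(\Sigma)$. Thus the theorem is equivalent to the statement that the Euler number $\int_\Sigma e(T\Sigma)$ equals $\chi(\Sigma)$, and I would establish this by combining the Poincar\'e--Hopf theorem with a Chern--Weil transgression argument.

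First I would fix a smooth vector field $s$ on $\Sigma$, viewed as a section of $T\Sigma$, with finitely many nondegenerate zeros $p_1,\dots,p_n$. By Poincar\'e--Hopf the sum of local indices equals the Euler characteristic, $\sum_i \mathrm{ind}_{p_i}(s)=\chi(\Sigma)$. On the punctured surface $\Sigma_\e=\Sigma\setminus\bigcup_i D_\e(p_i)$ the normalized field $\hat s=s/|s|$ is a nowhere-vanishing unit section; completing it to an oriented orthonormal frame $(\hat s, J\hat s)$ trivializes $T\Sigma|_{\Sigma_\e}$ and furnishes a globally defined $SO(2)$ connection $1$-form $\omega$ there, with curvature $\Omega=d\omega$. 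Hence $e(T\Sigma)=\tfrac{1}{2\pi}d\omega$ is exact on $\Sigma_\e$.

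Next I would apply Stokes' theorem to this exact form on $\Sigma_\e$, reducing $\int_{\Sigma_\e} e(T\Sigma)$ to a sum of boundary integrals $\tfrac{1}{2\pi}\int_{\partial D_\e(p_i)}\omega$ over the small circles, taken with the orientation induced by $\Sigma_\e$. As $\e\to 0$ the integral $\int_{\Sigma_\e}$ tends to $\int_\Sigma$, while each boundary integral converges to $2\pi$ times the winding number of $\hat s$ about $p_i$ relative to a local coordinate frame, namely $2\pi\,\mathrm{ind}_{p_i}(s)$. Summing yields $\int_\Sigma e(T\Sigma)=\sum_i \mathrm{ind}_{p_i}(s)=\chi(\Sigma)$, as required.

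The step I expect to be the main obstacle is precisely this limit: $\omega$ must be split into the piece determined by the metric connection, which extends continuously across $p_i$ so that its circle integral vanishes as $\e\to 0$, and the piece recording the rotation of $\hat s$ inside the frame, whose circle integral stabilizes at $2\pi\,\mathrm{ind}_{p_i}$. Isolating these two contributions cleanly, and verifying that the result is independent of the chosen field $s$ (which holds because any two admissible fields are homotopic through sections with isolated zeros, so the total index is invariant), is the technical heart of the argument. As an elementary cross-check I could instead take a geodesic triangulation of $\Sigma$ and sum the local Gauss--Bonnet relation $\int_T K\,dA=(\text{interior angle sum of }T)-\pi$ over all faces $T$: the angles around each vertex contribute $2\pi V$, the $-\pi$ terms contribute $-\pi F$, and the incidence relation $3F=2E$ then gives $\int_\Sigma K\,dA=2\pi(V-E+F)=2\pi\,\chi(\Sigma)$, i.e.\ \eqref{GB_formula} in the normalization where $\tfrac{1}{2\pi}R\sqrt{g}\,d^2x$ is the Euler form.
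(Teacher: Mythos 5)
Your argument is correct, but it is a genuinely different route from the paper's. The paper never proves the two-dimensional statement directly: Theorem 2.1 is recalled as a familiar special case, and the actual proof machinery is aimed at the general Chern--Gauss--Bonnet theorem. There the Euler class is written as a fermionic (Berezin) integral of $e^{-\frac14 R_{abcd}\psi^a\psi^b\bar\psi^c\bar\psi^d}$, reinterpreted as an on-shell supersymmetric partition function, lifted to a solution of the quantum master equation on $T^\ast[1]M$, and then localized by deforming the Lagrangian submanifold with an odd function built from a Morse function $W$; the BV integral collapses to $\sum_{p\in\mathrm{Crit}(W)}(-1)^{\gamma(p)}$, and the Morse theorem converts this to $\chi(X)$. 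Your proof instead runs through Poincar\'e--Hopf plus a Chern--Weil transgression: trivialize $T\Sigma$ away from the zeros of a section, write the Euler form as $\frac{1}{2\pi}d\omega$ there, and collect the winding numbers from the boundary circles via Stokes. The two are close cousins --- indeed the paper's own Remark after Lemma 8.1 observes that replacing the Morse gradient by a general vector field with isolated zeros reproduces exactly the Poincar\'e--Hopf sum you use --- but yours is elementary and specific to $n=1$ (and does not need Morse theory, only homotopy invariance of the total index), whereas the paper's buys the full even-dimensional statement and a consistency test of the BV localization method at the price of considerably more formalism. Your triangulation cross-check is also sound: $2\pi V-\pi F=2\pi(V-E+F)$ follows from $3F=2E$. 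One small caution on conventions: with the standard normalization $R=2K$ in two dimensions the left-hand side of \eqref{GB_formula} would equal $2\chi(\Sigma)$ rather than $\chi(\Sigma)$, so your hedge ``in the normalization used here'' is doing real work --- you should pin down whether $R$ denotes the Gaussian curvature or twice it before identifying $\frac{1}{2\pi}R\sqrt{g}\,d^2x$ with the Euler form.
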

Moreover, there is a generalization of the Gauss-Bonnet theorem for even-dimensional smooth manifolds.
\begin{Theorem} {\bf (Chern-Gauss-Bonnet)} Let $X$ be compact closed $2n$-dimensional Riemann manifold 
\be
\chi(X) = \int_X e(\mathcal{R}), 
\ee
where $\mathcal{R}$ is the curvature form of (any) metric connection on the tangent bundle.
\end{Theorem}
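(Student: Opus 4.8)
The plan is to realize $\int_X e(\mathcal{R})$ as the partition function of a zero-dimensional supersymmetric field theory and then evaluate it by deforming it with a Morse superpotential, using the BV localization of \cite{Losev:2023gsq}.

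First I would write the Euler class through the Pfaffian of the curvature, $e(\mathcal{R}) = (2\pi)^{-n}\,\mathrm{Pf}(\mathcal{R})$, and introduce odd variables to turn both the top-form integration over $X$ and the Pfaffian into a single Berezin integral. Concretely, I would take even coordinates $x^i$ on $X$ together with odd variables $\psi^i$ playing the role of $dx^i$, so that $\psi^k\psi^l$ represents $dx^k\wedge dx^l$, and a second set of odd variables $\chi^i$ linearizing the Pfaffian through $\mathrm{Pf}(\mathcal{R}) \leftrightarrow \int d\chi\, e^{\frac12 \chi^i R_{ij,kl}\,\psi^k\psi^l\,\chi^j}$. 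This assembles into $Z = \int dx\, d\psi\, d\chi\; e^{-S}$, with the action quadratic in the curvature-coupled fermions, and $Z = \int_X e(\mathcal{R})$ by construction. The even variables are coordinates on $X$, while the two odd sets account for the form integration and the Pfaffian, matching the on-shell content advertised in the introduction.

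Second, I would exhibit an odd symmetry $Q$ of $S$ that closes only modulo equations of motion and verify that the model lies in the quadratic class of \cite{Losev:2023gsq}, so that the BV localization theorem applies. Rewriting $Z$ as an integral over an odd-symplectic space, the Euler-class representation corresponds to one Lagrangian submanifold $L_0$. I would then deform $L_0$ within a family $L_t$ built from the gradient of a Morse function $h$, which amounts to inserting a term $t\,dh$ into the fermionic couplings. The heart of the argument, and the step I expect to be the main obstacle, is the invariance of $Z$ along $L_t$: with on-shell supersymmetry the naive $Q$-exactness reasoning breaks down, and one must check that each $L_t$ is a genuine Lagrangian submanifold of the BV geometry and that no boundary terms are generated as $t$ grows.

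Finally, I would send $t\to\infty$. The deformed integrand concentrates on the zero locus of $dh$, that is, the critical points of $h$; a Gaussian evaluation around each nondegenerate critical point $p$ contributes a sign $(-1)^{\mathrm{ind}(p)}$ fixed by the signature of the Hessian of $h$. Summing these contributions gives $Z = \sum_{p:\,dh(p)=0}(-1)^{\mathrm{ind}(p)}$, which by the Morse theorem equals $\chi(X)$. Since Chern-Weil theory guarantees that the cohomology class of $e(\mathcal{R})$, and hence $\int_X e(\mathcal{R})$, does not depend on the choice of metric connection, this establishes the identity for any such connection.
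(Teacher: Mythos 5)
Your proposal follows essentially the same route as the paper: the Berezin-integral representation of $\mathrm{Pf}(\mathcal{R})$ over $T[1]X\oplus T[1]X$, the on-shell $\mathcal{N}=2$ symmetry closing modulo equations of motion, the lift to a quadratic-class BV action solving the quantum master equation, deformation of the zero-section Lagrangian by $t\,dW$ for a Morse function, and the $t\to\infty$ Gaussian localization giving $\sum_p(-1)^{\gamma(p)}=\chi(X)$. You also correctly identify the Lagrangian-deformation invariance (handled in the paper by the QME verification and the Batalin--Vilkovisky--Schwarz theorem) as the step carrying the real technical weight.
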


The first proof of the theorem was given by Chern \cite{chern1945curvatura}. There are multiple proofs   \cite{Cordes:1994fc,Szabo:1996md,berwick2015chern,pestun2017introduction} relying on supersymmetry in physics inspired by  the works of Witten \cite{witten1982supersymmetry} and Alvarez-Gaume \cite{alvarez1983supersymmetry}.  The supersymmetric proofs express the Euler characteristics using the critical points for the superpotential, the Morse function on $X$. Our proof is no exception, and we will also use Morse theory results.

\section{Morse theory}

For a smooth manifold $X$ we consider a smooth function $h: X\to \mathbb{R}$, such that the critical points $\hbox{Crit}(h) = \{ p\in X\;|\; dh(p)=0\}$ of $h$ is a finite set of points. 
Critical point $p\in \hbox{Crit}(h)$ is a {\it non-degenerate critical point} if the Hessian matrix $\p_i\p_j h(p)$ has maximal rank. The non-degenerate critical points are isolated. A Morse function is a smooth function with non-degenerate critical points only. By Morse lemma, we can arrange the coordinates in the vicinity of a non-degenerate critical point, such that the Hessian matrix takes a standard form $\hbox{diag} (1,...,1,-1,...,-1)$. The number of negative entries is the {\it Morse index} $\g(p)$ of the critical point $p$. We denote  the set of critical points with Morse index $\g(p)=k$ by $Crit_k(h)$.

\begin{Theorem}\label{thm_Morse}{\bf (Morse)} For a smooth function $h: X \to \mathbb{R}$ with non-degenerate critical points the following holds
\begin{itemize}
\item Morse inequality for de Rham cohomology 
\be
\dim H_{dR}^k(X) \leq  |Crit_k(h)|.
\ee
\item Morse equality for Euler characteristic
\be
\begin{split}
\chi(X) = \sum_{k=0}^{\dim X} (-1)^k |Crit_k(h)|.
\end{split}
\ee
\end{itemize}

\end{Theorem}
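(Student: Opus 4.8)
The plan is to establish the classical structure theorem of Morse theory, that the manifold $X$ is homotopy equivalent to a CW complex with exactly one $k$-cell for each critical point of Morse index $k$, and then derive both statements by standard homological algebra. Throughout I identify $\dim H_{dR}^k(X)$ with the $k$-th Betti number $b_k$ via the de Rham theorem (so $\dim H_{dR}^k(X) = \dim H^k(X;\mathbb{R}) = b_k$), and I abbreviate $c_k = |Crit_k(h)|$.

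First I would fix an auxiliary Riemannian metric and study the sublevel sets $X_a = \{x \in X : h(x) \le a\}$. Since $X$ is compact and all critical points are non-degenerate, hence isolated, hence finite in number, the set of critical values is finite. The first lemma to prove is that if the interval $[a,b]$ contains no critical value, then $X_a$ is a deformation retract of $X_b$: one flows along a suitably rescaled gradient vector field of $h$, which is nonvanishing on $h^{-1}([a,b])$, so the sublevel set does not change its homotopy type as $a$ sweeps across a regular interval.

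The second and central lemma describes the crossing of a single critical value containing one critical point $p$ of index $k$. Using the Morse lemma to write $h = h(p) - x_1^2 - \cdots - x_k^2 + x_{k+1}^2 + \cdots + x_{2n}^2$ in local coordinates, I would show that $X_{h(p)+\e}$ is homotopy equivalent to $X_{h(p)-\e}$ with a single $k$-cell attached along the sphere spanned by the $k$ negative directions. Iterating over all critical values and invoking the first lemma on the regular intervals between them realizes $X$ as a CW complex with $c_k$ cells in dimension $k$. I expect this cell-attachment step to be the main obstacle: it requires the explicit construction of the attaching map and the accompanying deformation retraction in Milnor's normal-form argument, and is the only genuinely geometric input; everything else is bookkeeping.

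Given the CW structure, the two claims are immediate. The cellular chain complex $C_\bullet$ satisfies $\dim C_k = c_k$, and since $H_k(X)$ is a subquotient of $C_k$ one obtains $b_k = \dim H_k(X) \le \dim C_k = c_k = |Crit_k(h)|$, which is the Morse inequality. For the Euler characteristic, the Euler-Poincar\'e principle gives $\chi(X) = \sum_k (-1)^k \dim C_k = \sum_k (-1)^k c_k$, and combining this with $\chi(X) = \sum_k (-1)^k b_k$ yields the Morse equality. An alternative route that sidesteps the CW machinery is to build the Morse-Smale-Witten complex generated by the critical points and graded by index, whose differential counts gradient trajectories between points of adjacent index; its homology computes $H_\bullet(X)$, and the same subquotient and alternating-sum arguments apply verbatim.
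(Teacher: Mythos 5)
The paper does not prove this statement at all: Theorem \ref{thm_Morse} is quoted as a classical black-box result from Morse theory and is simply invoked in the final step of the Chern--Gauss--Bonnet argument, so there is no in-paper proof to compare against. Your proposal is the standard Milnor-style argument and its overall architecture is sound: gradient-flow retraction across regular intervals, cell attachment at each critical point, the resulting CW (homotopy) model with $c_k$ cells in dimension $k$, and then the weak Morse inequalities and the Euler--Poincar\'e identity by pure homological algebra. The homological-algebra endgame is complete as written. What remains a genuine sketch rather than a proof is exactly the step you flag: the cell-attachment lemma, which requires Milnor's explicit deformation retraction of $X_{h(p)+\e}$ onto $X_{h(p)-\e}\cup e^k$ built from the Morse normal form, plus two technical points you should not omit if you write this out in full --- (i) handling several critical points on the same level set (either simultaneously or after a small perturbation making $h$ self-indexing on critical values), and (ii) upgrading the iterated ``homotopy equivalent to a space with a cell attached'' statements to an actual CW complex via cellular approximation of the attaching maps, which is what licenses the use of the cellular chain complex. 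You also implicitly use compactness of $X$ (finitely many critical points, completeness of the gradient flow); that hypothesis is consistent with the paper's setting but should be stated. With those standard ingredients supplied, the argument is correct.
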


\section{Superspace integral representation for Euler class}

The first step of the Chern-Gauss-Bonnet theorem's  proof  is to represent the Euler class  as a supermanifold integral.  Certain signs in the integral representation  will depend on the conventions for  the integration over the odd variables.   We define a  Berezin integral over  multiple odd variables  $\theta_1,\ldots, \theta_m$ via 
\be\label{eq_Ber_int_definition}
\int D\theta_m\cdots D\theta_1 \; f(\theta_1,\ldots,\theta_m)  =  \frac{\p}{\p \theta^m} \cdots   \frac{\p}{\p \theta^2} \frac{\p}{\p \theta^1}  f. 
\ee
The  integration measure in the integral (\ref{eq_Ber_int_definition}) is commonly referred to as the coordinate Berezinian, which we denote as 
\be
D^m \theta =  D\theta_m\cdots D\theta_1.
\ee

\begin{Lemma}\label{lemma_superspace_integral_rep} For an arbitrary Riemann metric  $g$ on  a smooth $2n$-dimensional compact manifold $X$ the  Euler class has a superspace integral representation 
\be\label{eq_Eucler_class_superspace_integral}
\int_X  e (\mathcal{R})  =  \frac{(-1)^n}{(2\pi)^{n}}\int_{ T[1]X\oplus T[1]X}  d^{2n} x\; D^{2n}\psi D^{2n} \bar{\psi} \;  (\det g)^{-\frac12}\;e^{- \frac{1}{4}R_{abcd} \psi^a \psi^b \bar{\psi}^c \bar{\psi}^d},
\ee
where $\mathcal{R}_{ab}$  is the   curvature 2-form, related to the Riemann curvature tensor $R_{abcd}$ via
\be
\mathcal{R}_{ab} =  \frac12 \sum_{cd}R_{abcd} \;dx^c \wedge dx^d.
\ee
\end{Lemma}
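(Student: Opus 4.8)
The plan is to recognize the right-hand side as the Berezin-integral representation of the Pfaffian of the curvature, which is the classical expression for the Euler form. Recall that for the tangent bundle with a metric connection the Euler form is $e(\mathcal{R})=(2\pi)^{-n}\,\mathrm{Pf}(\Omega)$, where $\Omega_{ab}$ is the antisymmetric matrix of curvature $2$-forms in an orthonormal frame. First I would rewrite this Pfaffian in a coordinate frame. Writing the curvature with both fiber indices lowered by the metric, $\mathcal{R}_{ab}=g_{ac}\mathcal{R}^c{}_b=\tfrac12 R_{abcd}\,dx^c\wedge dx^d$, the vielbein relation $\Omega=(e^{-1})^{T}\mathcal{R}\,e^{-1}$ together with the identity $\mathrm{Pf}(B^{T}AB)=\det(B)\,\mathrm{Pf}(A)$ gives $\mathrm{Pf}(\Omega)=(\det g)^{-1/2}\,\mathrm{Pf}(\mathcal{R}_{ab})$. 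This is the origin of the factor $(\det g)^{-1/2}$ in the statement.

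Next I would unfold the coordinate Pfaffian into components. Expanding $\mathrm{Pf}(\mathcal{R}_{ab})=\tfrac{1}{2^n n!}\epsilon^{a_1\cdots a_{2n}}\mathcal{R}_{a_1a_2}\wedge\cdots\wedge\mathcal{R}_{a_{2n-1}a_{2n}}$ and substituting the curvature $2$-form, the $2n$ factors of $dx$ collapse to $\epsilon^{c_1d_1\cdots c_nd_n}\,dx^1\wedge\cdots\wedge dx^{2n}$, so that the coefficient of the coordinate volume form is the scalar density $P=\tfrac{1}{4^n n!}\epsilon^{a_1b_1\cdots a_nb_n}\epsilon^{c_1d_1\cdots c_nd_n}R_{a_1b_1c_1d_1}\cdots R_{a_nb_nc_nd_n}$. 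Integrating over $X$ then yields $\int_X e(\mathcal{R})=(2\pi)^{-n}\int d^{2n}x\,(\det g)^{-1/2}P$.

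It then remains to show that the fermionic integral on the right-hand side equals $(-1)^n P$. I would expand the exponential; since each factor of the quartic term $-\tfrac14 R_{abcd}\psi^a\psi^b\bar\psi^c\bar\psi^d$ carries two $\psi$'s and two $\bar\psi$'s, only the $n$-th power survives Berezin integration over all $2n$ of the $\psi$'s and all $2n$ of the $\bar\psi$'s, contributing the factor $(-1)^n/(4^n n!)$. After reordering the monomials into the form $\psi\cdots\psi\,\bar\psi\cdots\bar\psi$ (which costs no sign, as each pair is even relative to the others), the two Berezin integrals factorize and each evaluates, via the convention $\int D^{2n}\psi\,\psi^{a_1}\cdots\psi^{a_{2n}}=\epsilon^{a_1\cdots a_{2n}}$, to one of the Levi-Civita symbols. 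This reproduces exactly $(-1)^n P$, and the prefactor $(-1)^n$ in the statement cancels it, giving $\int_X e(\mathcal{R})$.

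The main obstacle is essentially bookkeeping: tracking the interlocking sign factors — the $(-1)^n$ from the fermionic integral against the $(-1)^n$ prefactor, the measure ordering in $D^{2n}\psi\,D^{2n}\bar\psi$, and the antisymmetrization signs hidden in $R_{abcd}$ — and confirming that the frame-change factor $(\det g)^{-1/2}$ pairs correctly with the coordinate volume element $d^{2n}x$. The one conceptual point that must be verified with care is that the $\psi$ and $\bar\psi$ Berezin integrals factorize without an extra sign; this holds precisely because there is an even number ($2n$) of $\psi$ variables commuting past the $\bar\psi$ derivatives.
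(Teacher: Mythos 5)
Your proposal is correct and follows essentially the same route as the paper: both recognize the $\bar{\psi}$-integral as the Berezin representation of the Pfaffian of the curvature (with the factor $(\det g)^{-\frac12}$ arising from the passage between the orthonormal and coordinate frames) and the $\psi$-integral as the integration of the resulting $2n$-form over $X$. The only difference is organizational --- you expand everything into components with two Levi-Civita symbols and verify the factorization of the two Berezin integrals directly, whereas the paper packages the $\psi$-integration as the identification of $\Omega^\ast(X)$ with functions on $T[1]X$ --- and your sign bookkeeping is consistent with the conventions (\ref{eq_Ber_int_definition}) and the sphere check in the paper's Remark.
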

\begin{proof} The  definition of the Euler class is 
\be\label{eq_def_euler_class}
e (\mathcal{R}) = \frac{1}{(2\pi)^n} \hbox{Pf} (\mathcal{R}).
\ee
The  Pfaffian  of the matrix has a Berezin integral (\ref{eq_Ber_int_definition})  representation
\be\label{eq_pfaff_int_repr}
\hbox{Pf} (A) = \frac{(-1)^n}{n! 2^n} \epsilon^{a_1b_1a_2b_2...a_nb_n}  A_{a_1b_1}....A_{a_nb_n} = \int_{\mathbb{R}^{0|2n}} D^{2n}\bar{\psi}\;\; e^{-\frac12 A_{ab} \bar{\psi}^a\bar{\psi}^b}.
\ee
Using the Pfaffian representation (\ref{eq_pfaff_int_repr}) the Euler class (\ref{eq_def_euler_class}) is
\be\label{eq_euler_class_integral}
e (\mathcal{R})  = \frac{(-1)^n}{(2\pi)^n}  \int_{\mathbb{R}^{0|2n}} D^{2n}\bar{\psi}\;\; e^{-\frac12 \mathcal{R}_{ab} \bar{\psi}^a\bar{\psi}^b}.
\ee
We can include the factor of $\sqrt{\det g}$ and change the integration  over  $\mathbb{R}^{0|2n}$ in (\ref{eq_euler_class_integral})  to the integration over  fibers of $T[1]X$, i.e.
\be\label{eq_euler_class_integral_fibers}
e (\mathcal{R})  = \frac{(-1)^n}{(2\pi)^n}  \int_{T[1]X / X} D^{2n}\bar{\psi} \;\; (\det g)^{-\frac12}\;\; e^{-\frac12 \mathcal{R}_{ab} \bar{\psi}^a\bar{\psi}^b}.
\ee
The space of differential forms $\Omega^\ast(X)$ on $X$  can be identified with the space of functions on a supermanifold $T[1]X$, while the integration   over $X$ of  differential form $\omega(x, dx)$  is identical to the  integration of the corresponding function $\omega (x,\psi)$ over the supermanifold  $T[1]X$ with the standard Berezinian, i.e. 
\be\label{eq_forms_superspace_integral}
\int_X \omega (x, dx) = \int_{ T[1]X} d^{2n}x D^{2n}\psi\;\; \omega(x,\psi).
\ee
Using an integral representation (\ref{eq_euler_class_integral_fibers})   and a superspace representation (\ref{eq_forms_superspace_integral}) for the differential forms integration  we arrive at equality (\ref{eq_Eucler_class_superspace_integral}) of the lemma, hence the proof is complete.
\end{proof}

\begin{Remark}
There are different signs, used in the literature for the Chern-Gauss-Bonnet theorem, due to different conventions for the Berezin integration (\ref{eq_Ber_int_definition}) and Pfaffian definition (\ref{eq_pfaff_int_repr}). 
The consistency of signs can be checked  using   $X = \mathbb{S}^{2n}$, the $2n$-dimensional round sphere of radius $\ell$. 
The  Euler characteristic is $\chi (\mathbb{S}^{2n}) = 2$.  

The $\mathbb{S}^{2n}$  is a maximally symmetric space, so the  Riemann curvature tensor  simplifies to 
\be
R_{abcd} = \frac{R}{2n(2n-1)} (g_{ac}g_{bd} - g_{bc} g_{ad}) = \frac{1}{\ell^2}(g_{ac}g_{bd} - g_{bc} g_{ad}),
\ee
where $g_{ab}$ is the metric components and $R = \frac{2n(2n-1)}{\ell^2} $ is the Ricci curvature.

The exponent in  (\ref{eq_Eucler_class_superspace_integral}) simplifies to 
\be
 -\frac{1}{4}R_{abcd} \psi^a \psi^b \bar{\psi}^c \bar{\psi}^d  = \frac{1}{2\ell^2} (g_{ab}\psi^a \bar{\psi}^b)^2.
\ee

The integral (\ref{eq_Eucler_class_superspace_integral}) evaluates into 
\be
\begin{split}
\chi(\mathbb{S}^{2n})  &=  \frac{(-1)^n}{(2\pi)^{n}}\int_{ T[1]\mathbb{S}^{2n}\oplus T[1]\mathbb{S}^{2n}}  d^{2n} x\; D^{2n}\psi D^{2n} \bar{\psi} \;  (\det g)^{-\frac12}\;e^{- \frac{1}{4}R_{abcd} \psi^a \psi^b \bar{\psi}^c \bar{\psi}^d}\\
& =  \frac{(-1)^n}{(2\pi)^{n}}  \int_{\mathbb{S}^{2n}} d^{2n}x   (\det g)^{-\frac12}\; \frac{(-1)^n (2n)!}{\ell^{2n}2^n n!}  \cdot  \det g  =\frac{(2n)!}{\ell^{2n}(4\pi)^n n!}  \cdot \hbox{Vol} (\mathbb{S}^{2n})\\
& = \frac{(2n)!}{\ell^{2n}(4\pi)^n n!}  \cdot  \frac{(4\ell^2 \pi)^n (n-1)!}{(2n-1)!} = \frac{2n}{n} =2.
  \end{split}
\ee
\end{Remark}

\section{Supersymmetric partition function}

The curvature term in (\ref{eq_Eucler_class_superspace_integral}) is the feature of  $\mathcal{N}=2$ supersymmetric quantum mechanical sigma-models \cite{witten1982supersymmetry,alvarez1983supersymmetry} on curved manifolds.
Moreover, the curvature term is the only term that remains if we reduce the dimension of the source space from one to zero. Hence, we expect it to be invariant under the zero-dimensional  version of the supersymmetry transformations. 
Furthermore, we can interpret the integral (\ref{eq_Eucler_class_superspace_integral}) for the Euler class  as the supersymmetric partition function. 

\begin{Lemma}\label{lemma_part_funct_rep} The superspace integral (\ref{eq_Eucler_class_superspace_integral})  is the  partition function for an $\mathcal{N}=2$ on-shell supersymmetric theory, i.e.
\be\label{eq_supersymmetric_part_function}
\int_X e(\mathcal{R})  =\frac{(-1)^n}{(2\pi)^n}   \int_{M} \mu_M\;\; e^{-S}   
\ee
 on a supermanifold $M = T[1]X \oplus T[1]X$  with coordinates $x^i, \psi^i, \bar{\psi}^i$, classical action
\be\label{eq_4fermion_action}
S[x,\psi,\bar{\psi}] = \frac14 R_{abcd}  \psi^a  \psi^b \bar{\psi}^c   \bar{\psi}^d
\ee
and the  integration measure $\mu_M = (\det g)^{-\frac12} \;\;d^{2n}x D^{2n} \psi D^{2n}\bar{\psi}$.
\end{Lemma}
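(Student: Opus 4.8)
The plan is to separate the claim into an elementary identification and its genuine content, the supersymmetry. The identification is immediate: comparing the integrand of (\ref{eq_Eucler_class_superspace_integral}) with $\mu_M e^{-S}$, the prefactor $(\det g)^{-\frac12}$ together with $d^{2n}x\, D^{2n}\psi\, D^{2n}\bar{\psi}$ is exactly $\mu_M$, and the Boltzmann weight $e^{-\frac14 R_{abcd}\psi^a\psi^b\bar{\psi}^c\bar{\psi}^d}$ is exactly $e^{-S}$ with $S$ as in (\ref{eq_4fermion_action}); hence (\ref{eq_supersymmetric_part_function}) holds termwise. Everything substantive lies in exhibiting an $\mathcal{N}=2$ algebra on $M$ that leaves $S$ and $\mu_M$ invariant and closes only modulo the equations of motion. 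For the two supercharges I would take the covariantized de Rham differential on the first copy of $T[1]X$ and its mirror on the second,
\be
Q x^i = \psi^i,\qquad Q\psi^i = 0,\qquad Q\bar{\psi}^i = -\Gamma^i_{jk}\psi^j\bar{\psi}^k,
\ee
with $\bar{Q}$ obtained by interchanging $\psi\leftrightarrow\bar{\psi}$. The Christoffel term is forced by requiring $\bar{\psi}^i$ to transform as a tangent vector under the shift of $x^i$ generated by $Q$.

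Invariance of the action is the first real computation. Since $Q\psi^i=0$, only $Q$ acting on $R_{abcd}(x)$ and on the pair $\bar{\psi}^c\bar{\psi}^d$ contributes, and I expect these pieces to assemble into $\frac14\nabla_e R_{abcd}\,\psi^e\psi^a\psi^b\bar{\psi}^c\bar{\psi}^d$: the connection terms on the indices $a,b$ drop out automatically because $\Gamma^f_{ea}$ is symmetric while $\psi^e\psi^a$ is antisymmetric, and the connection terms on $c,d$ are precisely those supplied by $Q\bar{\psi}$. Because $\psi^e\psi^a\psi^b$ is totally antisymmetric, this contraction projects onto $\nabla_{[e}R_{ab]cd}$, which vanishes by the second Bianchi identity; hence $QS=0$, and $\bar{Q}S=0$ follows from the pair symmetry $R_{abcd}=R_{cdab}$.

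Measure invariance reduces to a divergence computation. The flat Berezinian divergence of $Q$ receives its only nonzero contribution from $\partial(Q\bar{\psi}^i)/\partial\bar{\psi}^i=\Gamma^i_{ij}\psi^j=\frac12\partial_j(\log\det g)\,\psi^j$, whereas $Q$ acting on the weight $(\det g)^{-\frac12}$ produces $-\frac12\partial_j(\log\det g)\,\psi^j$. The two cancel exactly, so $\mu_M$ is $Q$-invariant, and $\bar{Q}$-invariant by the same argument.

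The step I expect to be the main obstacle, and the one that pins down the word \emph{on-shell}, is the closure of the algebra. One finds $Q^2x^i=Q^2\psi^i=0$ identically, but $Q^2\bar{\psi}^i$ comes out proportional to the curvature, of the schematic form $R^i{}_{kjl}\psi^j\psi^l\bar{\psi}^k$, which is not an identity. The task is then to recognize this combination as proportional to the variation $\partial S/\partial\bar{\psi}$, so that $Q^2$, and likewise $\{Q,\bar{Q}\}$, vanish only modulo the equations of motion. This on-shell closure is exactly what obstructs a naive off-shell localization and motivates the BV treatment announced in the introduction.
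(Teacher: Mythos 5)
Your proposal is correct and follows essentially the same route as the paper: the same supercharges (yours are the paper's with the roles of $Q$ and $\bar{Q}$ interchanged, written as coordinate transformations rather than vector fields), action invariance via the antisymmetry/pair symmetries of $R_{abcd}$ and the second Bianchi identity, measure invariance via cancellation of the Berezinian divergence $\Gamma^i_{ij}\psi^j$ against the variation of $(\det g)^{-\frac12}$, and closure only up to curvature terms proportional to $\p S/\p\bar{\psi}$. The only place you stop short is the final closure step, which you state as a task rather than verify; the paper completes it by the one-line check that $R^{a}{}_{bcd}\psi^c\psi^d\bar{\psi}^b = 2g^{ab}\p_{\bar{\psi}^b}S$ using $R_{abcd}=R_{cdab}$, and this works exactly as you anticipate.
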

\begin{proof}  The  $\mathcal{N}=2$ supersymmetric partition function is a superspace integral of an exponential function of the classical action, such that the action $S$ and the integration measure $\mu_M$ 
are invariant under the action of the two odd  vector fields ${Q}$ and $\bar{Q}$. Moreover,  the vector fields are  nilpotent and anticommute on the critical set  of $S$.

We use the  supersymmetric quantum mechanics results \cite{witten1982supersymmetry,alvarez1983supersymmetry} as a motivation for the supersymmetry vector fields
\be\label{eq_supersymmetry_vector_fields}
Q = \bar{\psi}^j \frac{\p}{\p x^j} +\psi^k\bar{\psi}^l  \G_{kl}^j \frac{\p }{\p \psi^j},\;\; \bar{Q} = \psi^j \frac{\p}{\p x^j} - \psi^k\bar{\psi}^l  \G_{kl}^j \frac{\p }{\p \bar{\psi}^j}.  
\ee
The action (\ref{eq_4fermion_action})  is invariant  under the supersymmetry  vector fields, i.e. 
\be
Q(S) = \bar{Q}(S)  =0.
\ee
Indeed,  the vector  field's action evaluates into 
\be
\begin{split}
Q (S) &= \left( \bar{\psi}^j \frac{\p}{\p x^j} +\psi^k\bar{\psi}^l  \G_{kl}^j \frac{\p }{\p \psi^j} \right)R_{abcd}  \psi^a  \psi^b \bar{\psi}^c   \bar{\psi}^d \\
& = \p_j R_{cdab}  \bar{\psi}^j  \bar{\psi}^c   \bar{\psi}^d  \psi^a  \psi^b +2   \G_{aj}^e  R_{ cd be} \bar{\psi}^j   \bar{\psi}^c   \bar{\psi}^d   \psi^a  \psi^b= \nabla_{[j} R_{ab]cd}  \bar{\psi}^j   \bar{\psi}^a   \bar{\psi}^b   \psi^c  \psi^d= 0.
\end{split}
\ee
We used the symmetry properties of the Riemann tensor  $R_{abcd} = - R_{bacd} =  - R_{cdab}$ and the Bianchi identity  $\nabla_{[j} R_{ab]cd} = 0$ in the last equality. The vanishing  of $\bar{Q}(S)$ is proven in a similar way.

 The invariance of the integration measure $\mu_M$ is equivalent to the divergence-free condition for the   supersymmetry vector fields, i.e.
 \be
 \hbox{div}_{\mu_M} Q =  \hbox{div}_{\mu_M} \bar{Q}  = 0.
 \ee
 The divergence of the vector field $v$  with respect to the measure $\mu = \rho \cdot \mu_c$  is  given by 
\be
\hbox{div}_{\mu_M} v = \hbox{div}_{\mu_c} v + v (\ln \rho), 
\ee
For the vector field 
\be
v = v^a \frac{\p}{\p x^a} + u^a  \frac{\p}{\p \psi^a} +\bar{u}^a  \frac{\p}{\p \bar{\psi}^a}
\ee
 the divergence in the standard measure $\mu_c =  d^{2n}x D^{2n}\psi D^{2n}\bar{\psi}$ is 
\be
\hbox{div}_{\mu_c} v = \p_{x^a}  v^a- (-1)^{|v|} \p_{\psi^a} u^a - (-1)^{|v|} \p_{\bar{\psi}^a} \bar{u}^a. 
\ee
For the vector fields (\ref{eq_supersymmetry_vector_fields}) the divergence evaluates into
\be
\begin{split}
\hbox{div}_{\mu_X} Q &= \hbox{div}_{\mu_c} Q + Q (\ln (\det g)^{-\frac12})  = \frac12 \bar{\psi}^a \p_a  \ln \det  g -\frac12 \bar{\psi}^a \p_a  \ln \det g = 0,\\
\hbox{div}_{\mu_X} \bar{Q} &= \hbox{div}_{\mu_c} \bar{Q} + \bar{Q} (\ln (\det g)^{-\frac12})  = \frac12 \psi^a \p_a  \ln  \det g - \frac12 \psi^a \p_a  \ln \det g  =0.
\end{split}
\ee
The algebra of supersymmetry  vector fields (\ref{eq_supersymmetry_vector_fields}) is
\be\label{eq_on_shell_algebra}
\begin{split}
[Q, Q\} &= -\psi^b \bar{\psi}^c \bar{\psi}^d R_{~bcd}^{a}  \p_{\psi^a} =-2 g^{ab}\; \p_{\psi^b} S  \;\p_{\psi^a} ,\\
 [\bar{Q}, \bar{Q}\} &=- \psi^c \psi^d\bar{\psi}^bR_{~bcd}^{a} \p_{\bar{\psi}^a}  =-2 g^{ab} \;\p_{\bar{\psi}^b} S\;  \p_{\bar{\psi}^a} ,\\
 [ \bar{Q},Q\} &=  \psi^c \psi^d\bar{\psi}^bR_{~bcd}^{a}  \p_{\psi^a}  +  \bar{\psi}^c \bar{\psi}^d  \psi^bR_{~bcd}^{a}   \p_{\bar{\psi}^a}  =-g^{ab}\; \p_{\psi^b} S  \;\p_{\bar{\psi}^a} -g^{ab}\; \p_{\bar{\psi}^b} S  \;\p_{\psi^a}.
\end{split}
\ee
The right-hand side of the graded commutators in (\ref{eq_on_shell_algebra}) vanish on the critical set $M_{crit} = \{ \phi \in M\;|\; dS(\phi) = 0\}$ hence the vector fields become nilpotent and anticommuting.   
We checked all required properties of the supersymmetric partition function, so the proof is complete.

\end{proof}
In  physics  terminology the supersymmetry algebra (\ref{eq_on_shell_algebra}) is the  {\it on-shell $d=0$ $\mathcal{N}=2$ supersymmetry algebra}.  The number $\mathcal{N}$ refers to the number of independent supersymmetries, i.e. the dimension of the corresponding
odd part of the superalgebra.  The term {\it on-shell} indicates that the nontrivial  commutators of supercharges  are proportional to the equations of motion for 
the action  (\ref{eq_4fermion_action}), i.e.
\be
\frac{\p S}{ \p\psi^j} = \frac12 R_{ j bcd}    \psi^b \bar{\psi}^c   \bar{\psi}^d,\;\;\; \frac{\p S}{ \p \bar{\psi}^j} =  \frac12 R_{ab j d}  \psi^a  \psi^b    \bar{\psi}^d.
\ee
Therefore, as long as equations of motions are satisfied, the on-shell supersymmetry algebra is the same as the off-shell one.
\be
[Q, Q\} = [\bar{Q}, Q\} = [\bar{Q}, \bar{Q}\} = 0.
\ee

\section{BV formalism}

For a supermanifold $M$ there is an odd symplectic manifold   $\mathcal{M} = T^\ast[1]M$ with the fiber coordinates known as the antifeilds. In particular, for $M  = T[1]X\oplus T[1]X$ with coordinates $x, \psi, \bar{\psi}$ the corresponding antifields are $x^\ast, \psi^\ast, \bar{\psi}^\ast$.  The canonical  symplectic form on $\mathcal{M} $ is 
\be\label{sympl_form}
\omega = dx^i\wedge dx^\ast_i  - d\psi^i\wedge d\psi^\ast_i - d\bar{\psi}^i\wedge d\psi^\ast_i.
\ee
The BV bracket  for two functions $f,g$ with parities $|f|, |g|$  is
\be
\begin{split}
\{ f, g\}  &=  \p_{x^i} f\; \p_{x^\ast_i} g +   (-1)^{|f|} \p_{x^\ast_i} f\; \p_{x^i} g -  \p_{\psi^\ast_i} f\; \p_{\psi^i} g -(-1)^{|f|} \p_{\psi^i} f\; \p_{\psi^\ast_i} g\\
&\qquad  -  \p_{\bar{\psi}^\ast_i} f\; \p_{\bar{\psi}^i} g -(-1)^{|f|} \p_{\bar{\psi}^i} f\; \p_{\bar{\psi}^\ast_i} g.
\end{split}
\ee
A   Berezinian 
\be
\mu = \rho(x,\psi, \bar{\psi})\; d^{2n} x D^{2n}\psi D^{2n}\bar{\psi} D^{2n}x^\ast d^{2n}\psi^\ast d^{2n} \bar{\psi} 
\ee
 on $\mathcal{M}$ is a compatible Berezinian on $(\mathcal{M}, \omega)$, while the corresponding    BV Laplacian is
 \be
 \Delta_\mu =    \frac{\p}{\p x^i} \frac{\p}{\p x^\ast_i} -  \frac{\p}{\p \psi^i} \frac{\p}{\p \psi^\ast_i}-  \frac{\p}{\p \bar{\psi}^i} \frac{\p}{\p \bar{\psi}^\ast_i} + \frac12\{\ln \rho, \cdot\}.
 \ee
\begin{Definition} For  an odd symplectic space with compatible Berezinian   $(\mathcal{M}, \omega, \mu)$ an even function  $\mathcal{S}\in C^\infty(\mathcal{M})$ is a solution to the     quantum master equation (QME) if it satisfies
\be
\label{quantum_mast_eq}
\frac12  \{\mathcal{S}, \mathcal{S}\} =   \Delta_\mu \mathcal{S}=0. 
\ee
\end{Definition}
In physics literature, a solution to QME is often called a BV action. Given a QME solution $\mathcal{S}$ there is a $\Delta_\mu$-closed exponential function $f = e^{-\mathcal{S}}$.

\begin{Remark} Our definition (\ref{quantum_mast_eq}) for the quantum master equation   differs from the commonly-used  definition  
\be\label{eq_liter_quantum_mast_eq}
\frac12 \{ \mathcal{S}, \mathcal{S}\}  = \hbar \Delta_\mu \mathcal{S} 
\ee
 in the BV literature \cite{Batalin:1981jr,Batalin:1983ggl,Schwarz:1992nx,mnev2019quantum}.  Our solution (\ref{quantum_mast_eq})  to the QME is the $\hbar$-independent solution  of (\ref{eq_liter_quantum_mast_eq}). An additional restriction on a solution ensures that the exponent 
 $f = e^{- \frac{1}{\hbar} \mathcal{S}}$ is not a formal series expansion in $\hbar$, but rather a well-defined function.
\end{Remark}

\begin{Definition} A {\it Lagrangian submanifold} $\mathcal{L}$ of an odd symplectic manifold $(\mathcal{M}, \omega)$ is 
\begin{itemize}
\item isotropic, i.e. $\omega|_{\mathcal{L}} = 0$;
\item not a proper submanifold of another isotropic submanifold of $\mathcal{M}$.
\end{itemize} 
\end{Definition}
For an odd function $V$ on $M$,  there is an  associate  Lagrangian submanifold
 \be
 \cl_V  = \hbox{graph} (dV) \subset T^\ast[1]M.
 \ee
 For a supermanifold $\mathcal{N}$   there is a  canonical map sending Berezinians $\mu$ on $T^\ast[1] \mathcal{N}$ to Berezinians $\sqrt{\mu|_\mathcal{N}}$ on $\mathcal{N}$. Locally, for  local coordinates $X^a$
 on $\mathcal{N}$, coordinates $(X^a, \Xi_a)$  on $T^\ast[1] \mathcal{N}$ and a Berezinian $\mu = \rho(X, \Xi) d^nX d^n\Xi$ the corresponding Berezinian in $\mathcal{N}$
 is $\sqrt{\mu|_\mathcal{N}} = \sqrt{\rho(X, 0)} d^nX$.
 
 \begin{Definition}\label{def_bv_integral}
 For odd symplectic manifold with compatible Berezinian  $(\mathcal{M}, \omega, \mu)$ a {\it BV-integral} of a function $f \in C^\infty(\mathcal{M})$, such that $\Delta_\mu f = 0$, over Lagrangian submanifold $\mathcal{L}$    is an integral 
 \be\label{eq_def_bv_int}
 \int_{\mathcal{L}} f\; \sqrt{\mu|_{\mathcal{L}}}.
 \ee
 \end{Definition}

\begin{Theorem}\label{thm_bv_invariance} {\bf (Batalin-Vilkovisky-Schwarz \cite{Schwarz:1992nx})}  For an odd-symplectic manifold with compatible  Berezinian  $(\mathcal{M}, \omega, \mu)$ and  compact body.
\begin{enumerate}
\item For any $g \in C^\infty(\mathcal{M})$  and Lagrangian submanifold  $\mathcal{L} \subset \mathcal{M}$ the BV integral 
 \be
 \int_{\mathcal{L}} \Delta_\mu g\; \sqrt{\mu|_{\mathcal{L}}} = 0.
 \ee
 \item For a pair of Lagrangian submanifolds $\mathcal{L}$ and $\mathcal{L}'$ with homologous bodies in the body of $\mathcal{M}$ and a function, such that $\Delta_\mu f = 0$ the following holds
  \be
 \int_{\mathcal{L}}  f\; \sqrt{\mu|_{\mathcal{L}}} =   \int_{\mathcal{L}'}  f\; \sqrt{\mu|_{\mathcal{L}'}} .
\ee
\end{enumerate}
\end{Theorem}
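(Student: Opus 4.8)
The plan is to obtain the invariance statement (part 2) from the vanishing of $\Delta_\mu$-exact integrals (part 1) together with a formula for how the integral varies as the Lagrangian submanifold moves. So I would prove part 1 first and treat it as the workhorse of the whole theorem.

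For part 1 I would argue locally. By the odd analogue of Darboux's theorem, every point of $\mathcal{M}$ has a chart with canonical coordinates $(q^a,p_a)$ in which $\omega = dq^a\wedge dp_a$ and in which the chosen Lagrangian $\mathcal{L}$ is the graph of $dV$ for some odd generating function $V$; the fibrewise shift $p_a\mapsto p_a-\partial V/\partial q^a$ is a symplectomorphism carrying $\mathcal{L}$ to the zero section $\{p=0\}$ and intertwining $\Delta_\mu$ with the Laplacian of the pushed-forward Berezinian. Writing the Berezinian in an adapted chart as $\mu=\rho(q)\,d^nq\,d^np$ (fibrewise constant, which compatibility with $\omega$ permits near $\mathcal{L}$), the key computation is an identity of the form
\be
\bigl(\Delta_\mu g\bigr)\big|_{p=0}\,\sqrt{\mu|_{\mathcal{L}}}
= \frac{\partial}{\partial q^a}\Bigl(\sqrt{\rho(q)}\,\partial_{p_a} g\big|_{p=0}\Bigr)\,d^nq,
\ee
in which the second-order part of $\Delta_\mu$ supplies the $\partial_q\partial_p g$ term while the $\tfrac12\{\ln\rho,\cdot\}$ piece reconstructs precisely the derivative of the prefactor $\sqrt{\rho}$. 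This exhibits the integrand as a sum of coordinate derivatives; summing against a partition of unity subordinate to a Darboux atlas, the even components integrate to zero by Stokes' theorem on the compact body, and the odd components are annihilated automatically by the Berezin integral. Hence $\int_{\mathcal{L}}\Delta_\mu g\,\sqrt{\mu|_{\mathcal{L}}}=0$.

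For part 2 I would connect $\mathcal{L}$ and $\mathcal{L}'$ by a smooth family $\mathcal{L}_t$ of Lagrangian submanifolds — the hypothesis that the bodies are homologous is exactly what permits such an interpolation, locally by $V_t=(1-t)V+tV'$ and globally along a chain realizing the homology. Each infinitesimal motion is generated by an odd Hamiltonian $H_t$, and I would establish the variation formula
\be
\frac{d}{dt}\int_{\mathcal{L}_t} f\,\sqrt{\mu|_{\mathcal{L}_t}}
= \int_{\mathcal{L}_t}\Delta_\mu\!\bigl(H_t\, f\bigr)\,\sqrt{\mu|_{\mathcal{L}_t}} ,
\ee
which follows from the graded Leibniz rule $\Delta_\mu(Hf)=(\Delta_\mu H)\,f \pm H\,\Delta_\mu f \pm \{H,f\}$ together with the hypothesis $\Delta_\mu f=0$: the change of $f$ under the flow contributes $\{H_t,f\}$ while the change of the restricted half-density $\sqrt{\mu|_{\mathcal{L}_t}}$ contributes the $(\Delta_\mu H_t)f$ term, and the two combine into a single $\Delta_\mu$-exact expression. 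The right-hand side vanishes by part 1, so the integral is constant in $t$ and the two BV integrals agree.

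The main obstacle is the density-dependent bookkeeping that makes both displayed identities hold invariantly. Concretely, the delicate points are: verifying that the $\tfrac12\{\ln\rho,\cdot\}$ correction in $\Delta_\mu$ combines with the restriction to a genuine total derivative rather than leaving a residual term involving $\partial_p\ln\rho$ (this is where compatibility of the Berezinian with $\omega$, and the existence of fibrewise-constant adapted charts, is essential); checking that the local Darboux straightenings patch so that the total-divergence form is globally well defined; and controlling the signs of the Leibniz rule and of the variation formula in the $\mathbb{Z}_2$-graded setting. The cleanest route through all three is to recast everything in Khudaverdian's half-density formalism, in which $\Delta_\mu$ is the canonical square-zero operator acting on half-densities and both identities above become coordinate-free statements.
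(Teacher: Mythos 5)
First, a point of comparison: the paper does not actually prove this theorem --- it is quoted as a known result of Schwarz \cite{Schwarz:1992nx} and used as a black box --- so there is no in-paper argument to measure your proposal against. Judged on its own, your sketch is a correct outline of the standard proof, closer in spirit to modern expositions such as \cite{mnev2019quantum} than to Schwarz's original. Part 1 is right: in Darboux coordinates adapted to $\mathcal{L}$, compatibility of $\mu$ (equivalently $\Delta_\mu^2=0$) is exactly what lets you normalize the density to be fibrewise constant --- Schwarz's normalization theorem in fact gives $\rho\equiv 1$ locally --- after which $(\Delta_\mu g)|_{\mathcal{L}}\,\sqrt{\mu|_{\mathcal{L}}}$ is the divergence of the vector density $\sqrt{\rho}\,\partial_{p_a}g|_{p=0}$; since that object is chart-independent, the partition-of-unity and Stokes argument patches as you claim. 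The variation formula $\frac{d}{dt}\int_{\mathcal{L}_t}f\sqrt{\mu|_{\mathcal{L}_t}}=\int_{\mathcal{L}_t}\Delta_\mu(H_tf)\,\sqrt{\mu|_{\mathcal{L}_t}}$ in Part 2 is likewise standard and correct given $\Delta_\mu f=0$. The one genuinely soft spot is the assertion that homologous bodies always yield a smooth interpolating family of Lagrangian submanifolds generated by gauge fermions: Schwarz sidesteps this by showing directly that the BV integral depends only on the homology class of the body (using the structure of a tubular neighborhood of a Lagrangian), whereas your interpolation only covers Lagrangians in the same connected component of the space of Lagrangian submanifolds, e.g.\ graphs $\mathcal{L}_{V_t}$ over a fixed body. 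That restricted version is all this paper actually needs --- it deforms $\mathcal{L}_0$ to $\mathcal{L}_V$ along the family $\mathcal{L}_{tV}$ --- so your argument suffices for the application, but as a proof of the theorem as stated the general homologous-bodies case is left unjustified.
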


\section{BV description of the curvature theory}

According to results of    \cite{Losev:2023gsq}    a supersymmetric partition function  (\ref{eq_supersymmetric_part_function}) defines an  approximate solution to the QME 
\be\label{eq_approx_solution_qme}
\mathcal{S}(\phi, \phi^\ast) = S(\phi) +  \mathcal{Q} + \Pi^{(2)}.
\ee
The first term in (\ref{eq_approx_solution_qme}) is  the classical action   (\ref{eq_4fermion_action}) for the supersymmetric system.  The linear term in antifields,   $\mathcal{Q}$,  is the generator of supersymmetries  (\ref{eq_supersymmetry_vector_fields}) 
\be
\mathcal{Q} = (  \ve \bar{\psi}^j +\bar{\ve} \psi^j )  x_j^\ast +  \psi^k\bar{\psi}^l  \G_{kl}^j  (  \ve \psi^\ast_j - \bar{\ve} \bar{\psi}^\ast_j ).
\ee
The quadratic term in antifields  $\Pi^{(2)}$ is a  BV bivector, given by the commutation relations (\ref{eq_on_shell_algebra})
\be\label{eq_bivector}
\Pi^{(2)} =   \frac12 g^{kj} (  \ve \psi^\ast_j -\bar{\ve} \bar{\psi}^\ast_j )  (  \ve \psi^\ast_k -\bar{\ve} \bar{\psi}^\ast_k ). 
\ee

Note that the bivector is not a constant bivector as in  simple examples of \cite{Losev:2023gsq}. However it satisfies the necessary properties, so the QME solution (\ref{eq_approx_solution_qme}) is a quadratic type theory according to the refined classification of \cite{Losev:2023gsq}. 

\begin{Proposition}\label{prop_quantum_master_solution} The BV action 
\be\label{eq_BV_action_curvature_model}
\begin{split}
\mathcal{S}  &= \frac14 R_{abcd} \psi^a \psi^b \bar{\psi}^c \bar{\psi}^d + (  \ve \bar{\psi}^j +\bar{\ve} \psi^j )x_j^\ast  +    \psi^k\bar{\psi}^l  \G_{kl}^j (  \ve \psi^\ast_j - \bar{\ve} \bar{\psi}^\ast_j )
\\&\qquad\qquad\qquad + \frac12 g^{kj} (  \ve \psi^\ast_j -\bar{\ve} \bar{\psi}^\ast_j )  (  \ve \psi^\ast_k -\bar{\ve} \bar{\psi}^\ast_k ) 
\end{split}
\ee
solves the quantum master equation on an odd  symplectic space  $\mathcal{M} = T^\ast[1]M$ with canonical symplectic form (\ref{sympl_form}) and  a compatible Berezinian
\be
\mu  =  \frac{1}{\det g}  d^{2n} x  D^{2n}\psi  D^{2n} \bar{\psi}  D^{2n} x^\ast  d^{2n} \psi^\ast d^{2n}\bar{\psi}^\ast.
\ee
\end{Proposition}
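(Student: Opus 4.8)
The plan is to verify the two halves of the quantum master equation (\ref{quantum_mast_eq}) separately, exploiting the decomposition $\mathcal{S} = S + \mathcal{Q} + \Pi^{(2)}$ from (\ref{eq_approx_solution_qme}) into pieces of antifield number $0,1,2$. Since the antibracket and the Laplacian $\Delta_\mu$ each lower the total antifield number by one, both equations split into homogeneous components that can be checked degree by degree. Because the ghost parameters $\ve,\bar\ve$ are commuting, $\mathcal{Q}$ is linear and $\Pi^{(2)}$ is quadratic in $(\ve,\bar\ve)$; the antifield-number grading is therefore simultaneously a grading by powers of the ghosts, and in particular all three ghost sectors $\ve^2,\ \ve\bar\ve,\ \bar\ve^2$ will be present (were $\ve,\bar\ve$ anticommuting, only the cross term would survive).

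First I would treat the classical master equation $\{\mathcal{S},\mathcal{S}\}=0$. Collecting terms by antifield number gives the descent system
\begin{itemize}
\item degree $0$: $\{S,\mathcal{Q}\}=0$,
\item degree $1$: $\{S,\Pi^{(2)}\}+\frac12\{\mathcal{Q},\mathcal{Q}\}=0$,
\item degree $2$: $\{\mathcal{Q},\Pi^{(2)}\}=0$,
\item degree $3$: $\frac12\{\Pi^{(2)},\Pi^{(2)}\}=0$.
\end{itemize}
The degree-$0$ relation is exactly $\ve\,Q(S)+\bar\ve\,\bar{Q}(S)=0$, the supersymmetry invariance of the action already established in the proof of Lemma \ref{lemma_part_funct_rep}. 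The degree-$3$ relation is automatic: $\Pi^{(2)}$ depends only on $x$ and on the antifields $\psi^\ast,\bar{\psi}^\ast$, so every term of $\{\Pi^{(2)},\Pi^{(2)}\}$ demands either a vanishing $x^\ast$-derivative or a vanishing field derivative and drops out. The degree-$2$ relation reduces, after pairing the $x^\ast$--$x$ channel of $\mathcal{Q}$ against $\p_x\Pi^{(2)}$ and the $\psi^\ast,\bar{\psi}^\ast$ channels against $\p_\psi\mathcal{Q},\p_{\bar{\psi}}\mathcal{Q}$, to the covariant constancy of the inverse metric $\nabla_m g^{kj}=\p_m g^{kj}+\G^k_{ml}g^{lj}+\G^j_{ml}g^{kl}=0$, which holds for the Levi-Civita connection.

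The degree-$1$ relation is the heart of the statement and the step I expect to be the main obstacle. Here $\frac12\{\mathcal{Q},\mathcal{Q}\}$ reproduces the on-shell supersymmetry algebra (\ref{eq_on_shell_algebra}): because $[Q,Q\}$, $[\bar{Q},\bar{Q}\}$, $[\bar{Q},Q\}$ close only modulo equations of motion, their BV image is proportional to $g^{ab}\p_{\psi^b}S$ and $g^{ab}\p_{\bar{\psi}^b}S$ contracted with antifields. The term $\{S,\Pi^{(2)}\}$ produces the same equation-of-motion factors, with coefficients fixed by the bivector (\ref{eq_bivector}). I would verify that the three ghost sectors $\ve^2,\ \ve\bar\ve,\ \bar\ve^2$ match the three commutators of (\ref{eq_on_shell_algebra}) one-to-one, tracking the signs of the antibracket, so that the cancellation is exact. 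This is precisely the assertion that the model is of quadratic type in the classification of \cite{Losev:2023gsq}, and it is where the normalization $\frac12 g^{kj}$ of $\Pi^{(2)}$ is forced.

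Finally I would check the quantum part $\Delta_\mu\mathcal{S}=0$ with $\rho=(\det g)^{-1}$, so that $\frac12\ln\rho=\ln(\det g)^{-\frac12}$ reproduces the density used in Lemma \ref{lemma_part_funct_rep}. The piece $\Delta_\mu S$ vanishes since $S$ carries no antifields and $\ln\rho$ is field-independent. The piece $\Delta_\mu\mathcal{Q}$ equals the total divergence $\ve\,\mathrm{div}_{\mu_M}Q+\bar\ve\,\mathrm{div}_{\mu_M}\bar{Q}$, already shown to vanish in Lemma \ref{lemma_part_funct_rep}. The piece $\Delta_\mu\Pi^{(2)}$ vanishes term by term: the flat part annihilates it because $\p_{\psi_i^\ast}\Pi^{(2)}$ and $\p_{\bar{\psi}_i^\ast}\Pi^{(2)}$ are independent of the fields $\psi,\bar{\psi}$, while $\{\ln\rho,\Pi^{(2)}\}=0$ because $\Pi^{(2)}$ has no $x^\ast$-dependence. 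Assembling the four descent identities together with the three vanishing Laplacian pieces then yields (\ref{quantum_mast_eq}).
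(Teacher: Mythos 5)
Your proposal is correct and follows essentially the same route as the paper: an expansion of the quantum master equation by antifield number, with the same key identities at each order (metric compatibility $\nabla_j g^{kl}=0$ for the symmetrized coefficient in $\{\mathcal{Q},\Pi^{(2)}\}=0$, and triviality of $\{\Pi^{(2)},\Pi^{(2)}\}$ and of $\Delta_\mu\Pi^{(2)}$ because the bivector involves only $\psi^\ast,\bar{\psi}^\ast$ with $x$-dependent coefficients). The only difference in scope is that the paper's proof verifies just these top orders, treating the degree-$0$, degree-$1$ and $\Delta_\mu\mathcal{Q}$ components as already supplied by Lemma \ref{lemma_part_funct_rep} and the ``approximate solution'' framework of \cite{Losev:2023gsq}, whereas you re-derive them; your identification of the degree-$1$ cancellation $\{S,\Pi^{(2)}\}+\frac12\{\mathcal{Q},\mathcal{Q}\}=0$ with the on-shell algebra (\ref{eq_on_shell_algebra}) and the normalization of (\ref{eq_bivector}) is exactly the intended content. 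One small slip: $\ln\rho=-\ln\det g$ is not field-independent, since it depends on $x$; the bracket $\{\ln\rho,S\}$ vanishes because neither argument depends on the antifields.
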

\begin{proof} The statement of proposition is   that the  approximate solution (\ref{eq_approx_solution_qme}) does not aquire higher order terms in antifields. Hence, we need to check the QME for orders 
$\cO(\phi^\ast)^2$, $\cO(\phi^\ast)^3$  and the triviality of BV Laplacian for the bivector (\ref{eq_bivector}).

 The $\cO(\phi^\ast)^2$ order QME 
\be
\begin{split}
\{ \mathcal{Q}, \Pi^{(2)}\} &= \{ (  \ve \bar{\psi}^j +\bar{\ve} \psi^j )x_j^\ast , \Pi^{(2)}\} + \{\psi^k\bar{\psi}^l  \G_{kl}^j ( \ve \psi^\ast_j -\bar{\ve} \bar{\psi}^\ast_j ), \Pi^{(2)} \}   \\
& = -\frac12 \left( \p_j g^{kl}  +2g^{kc}  \G_{c j}^l \right) (\ve \bar{\psi}^j   + \bar{\ve}\psi^j)  ( \ve \psi^\ast_l -\bar{\ve} \bar{\psi}^\ast_l)(  \ve \psi^\ast_k -\bar{\ve} \bar{\psi}^\ast_k ) = 0. 
\end{split}
\ee
The last equality follows from the symmetrization over $kl$ for  the identity 
\be
\begin{split}
 \p_j g^{kl} + 2g^{kc}  \G_{c j}^l =   g^{ka} g^{lb} (\p_a g_{bj}- \p_b g_{aj}) 
\end{split}
\ee
 The $\cO(\phi^\ast)^3$ order QME  
\be
\{ \Pi^{(2)}, \Pi^{(2)}\} = 0
\ee
is rather trivial, since the bivector has only components along the fermionic antifields $\psi^\ast, \bar{\psi}^\ast$, while the coefficients, for these components depend on $x$ only.

The BV Laplacian for bivector 
\be
\Delta_\mu \Pi^{(2)} = \Delta_{\mu_c} \Pi^{(2)} + \frac12\{\ln \det g(x), \Pi^{(2)}\} =  0
\ee
is almost trivial since the determinant of metric $\det g$ has only $x$-dependence while the $ \Pi^{(2)}$ is independent on $x^\ast$.
\end{proof}

\section{BV localization}

The supersymmetric partition function (\ref{eq_supersymmetric_part_function}) has a  BV integral (\ref{eq_def_bv_int}) representation.  In particular, it equals to the  QME solution (\ref{eq_BV_action_curvature_model})  integrated  over trivial Lagrangian submanifold $\cl_0$, the zero section of the $T^\ast[1] M$ bundle
\be\label{eq_susy_part_function_trivial_bv_integral}
\frac{1}{(2\pi)^n}   \int_{M} \mu_M\;\; e^{-S}  =   \frac{1}{(2\pi)^{n}} \int_{\cl_{0}} \sqrt{\mu}\;\; e^{-  \mathcal{S}}. 
\ee

For a Morse function $W:X \to \mathbb{R}$, we  define a Lagrangian submanifold $\cl_V$, associated to with odd function $V = t \psi^i \p_i W $. In our coordinates
\be\label{eq_lagr_submanifold_large_antifields}
\mathcal{L}_{V} = \left\{ x^\ast_i = t \psi^j \p_i\p_j W(x),\;\; \psi^\ast_i  = t \p_i W(x),\;\; \bar{\psi}^\ast_i = 0 \right\} \subset T^\ast[1] M.
\ee

\begin{Lemma}\label{lemma_bv_localiz} The $t \to \infty$ limit of the  BV integral   for the QME solution (\ref{eq_BV_action_curvature_model}) with $\ve =1, \; \bar{\ve}=0$ over the  Lagrangian submanifold (\ref{eq_lagr_submanifold_large_antifields}),  is expressed via Morse indices of $W$, i.e.
\be\label{eq_Bv_integral_limit_weighted_sum}
I_V =  \frac{(-1)^n}{(2\pi)^{n}} \int_{\cl_{V}} \sqrt{\mu}\;\; e^{-  \mathcal{S}} = \sum_{p \in Crit(W)} (-1)^{\g(p)}.
\ee
\end{Lemma}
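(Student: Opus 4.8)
The plan is to evaluate $I_V$ directly by restricting to $\cl_V$ and then taking the localizing limit $t\to\infty$. First I would substitute the defining relations of the Lagrangian (\ref{eq_lagr_submanifold_large_antifields}), namely $x^\ast_i=t\,\psi^j\p_i\p_j W$, $\psi^\ast_i=t\,\p_i W$ and $\bar\psi^\ast_i=0$, into the BV action (\ref{eq_BV_action_curvature_model}) with $\ve=1,\bar\ve=0$. This collapses the antifield-dependent terms and leaves an ordinary action on $M=T[1]X\oplus T[1]X$,
\be
\mathcal{S}\big|_{\cl_V}=\frac14 R_{abcd}\psi^a\psi^b\bar\psi^c\bar\psi^d+t\,\p_i\p_j W\,\bar\psi^j\psi^i+t\,\G^j_{kl}\,\p_j W\,\psi^k\bar\psi^l+\frac{t^2}{2}g^{ij}\,\p_i W\,\p_j W.
\ee
Because the density $\rho=1/\det g$ depends on $x$ only and $\cl_V$ is a graph over the base coordinates $(x,\psi,\bar\psi)$, the half-density construction gives $\sqrt{\mu|_{\cl_V}}=(\det g)^{-\frac12}d^{2n}x\,D^{2n}\psi\,D^{2n}\bar\psi=\mu_M$, the measure of Lemma \ref{lemma_part_funct_rep}. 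Thus $I_V=\frac{(-1)^n}{(2\pi)^n}\int_M\mu_M\,e^{-\mathcal{S}|_{\cl_V}}$.

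The decisive term is the bosonic potential $\frac{t^2}{2}g^{ij}\p_iW\,\p_jW=\frac{t^2}{2}|dW|_g^2$, which as $t\to\infty$ suppresses all contributions except small neighbourhoods of $\mathrm{Crit}(W)$, where $dW=0$. Near each $p\in\mathrm{Crit}(W)$ I would rescale $x=x_p+y/t$ together with $\psi=\eta/\sqrt t$, $\bar\psi=\bar\eta/\sqrt t$. The Jacobian $t^{-2n}$ from $d^{2n}x$ exactly cancels the factor $t^{2n}$ from the Berezin measure $D^{2n}\psi\,D^{2n}\bar\psi$. Expanding coefficients about $p$ via $\p_iW\simeq\p_i\p_j W(p)\,y^j/t$ then shows that the $\G\cdot dW$ term scales as $t^{-1}$ and the four-fermion curvature term as $t^{-2}$, so both drop out of the limit. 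Writing $H_{ij}=\p_i\p_j W(p)$ for the Hessian, the surviving quadratic model has coefficients frozen at $p$,
\be
\mathcal{S}_p=\frac12 g^{ij}(p)\,H_{ik}H_{jl}\,y^k y^l+H_{ij}\,\bar\eta^j\eta^i.
\ee

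The remaining step is the two decoupled Gaussian integrals at each critical point. The even integral gives $\int d^{2n}y\,e^{-\frac12 (H^{\!\top}g^{-1}H)_{kl}y^ky^l}=(2\pi)^n(\det g(p))^{\frac12}|\det H|^{-1}$, and the Berezin integral over $\eta,\bar\eta$ yields $\int D^{2n}\eta\,D^{2n}\bar\eta\,e^{-H_{ij}\bar\eta^j\eta^i}=(-1)^n\det H$ with the conventions (\ref{eq_Ber_int_definition}). Collecting factors, the two powers $(\det g(p))^{\pm\frac12}$ cancel, the volume $(2\pi)^n$ cancels the prefactor, the two signs $(-1)^n$ cancel, and the Hessian determinants combine as $\det H/|\det H|=\mathrm{sgn}\,\det H=(-1)^{\g(p)}$, since $\g(p)$ counts the negative eigenvalues of $H$. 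Summing over $\mathrm{Crit}(W)$ gives $I_V=\sum_{p\in\mathrm{Crit}(W)}(-1)^{\g(p)}$. (Alternatively, one may first invoke Theorem \ref{thm_bv_invariance} to argue that $I_V$ is $t$-independent, and then evaluate at large $t$.)

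I expect the main obstacle to lie not in the algebra but in rigorously controlling the $t\to\infty$ limit: one must justify, by a dominated-convergence or stationary-phase argument on the compact $X$, that contributions from outside arbitrarily small neighbourhoods of $\mathrm{Crit}(W)$ vanish and that the coupled boson–fermion rescaling indeed sends every non-quadratic term to zero. The secondary delicate point is the sign bookkeeping in the fermionic Gaussian: producing precisely $(-1)^n\det H$, and hence the clean cancellation with the prefactor $(-1)^n$, depends on the ordering conventions in (\ref{eq_Ber_int_definition}), and it is exactly this factor that fixes the overall sign so that each critical point contributes $(-1)^{\g(p)}$.
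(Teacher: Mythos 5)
Your proposal is correct and follows essentially the same route as the paper: restrict $\mathcal{S}$ to $\mathcal{L}_V$, use the $\tfrac{t^2}{2}|dW|_g^2$ term to localize to $\mathrm{Crit}(W)$, perform the fermionic and bosonic Gaussians, and observe that all factors cancel except $\mathrm{sgn}\det H=(-1)^{\gamma(p)}$. The only (immaterial) differences are that you rescale the fermions as well as the bosons to organize the $t$-counting, and you read off the sign from $\det H/|\det H|$ directly rather than passing to Morse-lemma coordinates as the paper does.
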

\begin{proof}
The restriction of the QME solution (\ref{eq_BV_action_curvature_model}) to the Lagrangian  submanifold (\ref{eq_lagr_submanifold_large_antifields}) evaluates into
\be\label{eq_large_t_lagrangian_bv_action}
\mathcal{S}\Big|_{\mathcal{L}_{V}} = \frac14  R_{abcd} \psi^a \psi^b \bar{\psi}^c \bar{\psi}^d + t \p_j \p_k W  \psi^k \bar{\psi}^j +t\psi^k\bar{\psi}^l  \G_{kl}^j   \p_j W   + \frac12  t^2 g^{kj}   \p_j W \p_k W.
\ee
The BV integral in the $t \to \infty$ simplifies since we can drop the curvature term in (\ref{eq_large_t_lagrangian_bv_action}) comparing to the $t$-enhanced  quadratic fermionic terms  
\be\nn
I_{V} = \frac{(-1)^n}{(2\pi)^n} \int_{M}  d^{2n} x D^{2n} \psi D^{2n} \bar{\psi}\;\; ( \det g)^{-\frac12} \exp \left(- t \nabla_j \p_k W  \psi^k \bar{\psi}^j  -\frac12 t^2 g^{kj}   \p_j W \p_k W \right).
\ee
The integral over fermions is Gaussian, so we can perform it exactly 
\be
\int D^{2n} \psi \; D^{2n} \bar{\psi}\;\;  \exp \left(- t \nabla_j \p_k W  \psi^k \bar{\psi}^j  \right) =  (-1)^nt^{2n} \det (\nabla_j \p_k W ).
\ee
An expression $g^{kj}   \p_j W \p_k W$ is non-negative  on $X$ and takes zero values at critical points of $W$.  Hence the integral  over $X$ for large $t$ localizes to the neighborhood of critical points of $W$, i.e.
\be\label{eq_crit_point_sum_integral}
I_{V} = \frac{(-1)^{2n}}{(2\pi)^n} \sum_{p\in \hbox{Crit}(W)}  \int_{U_p} d^{2n} x\;   ( \det g)^{-\frac12}\;  \exp \left(- \frac12 t^2 g^{kj}   \p_j W \p_k W \right) \; t^{2n} \det (\nabla_j \p_k W ).
\ee 
The integral (\ref{eq_crit_point_sum_integral}) is invariant under the coordinate transformations on $X$. By the Morse lemma, we  can choose the local coordinates, such that near the critical point $p$ Morse function $W$ takes the form 
\be
W =W(p) - \frac12 \sum_{j=1}^{\g(p)} (y^j)^2  +\frac12  \sum_{j= \g(p)+1}^{2n} (y^j)^2,
\ee 
where $\g(p)$ is the Morse index of the critical point $p$. The determinant in Morse coordinates 
\be\label{eq_fermion_det_crit_point}
 \det (\nabla_j \p_k W )(p) = (-1)^{\g(p)} .
\ee
We perform a variable change  $z^i = ty^i$ so that the argument of the exponent near the critical point $p$ simplifies 
\be
t^2 g^{kj} (x)  \frac{\p W}{\p x^j} \frac{\p W}{\p x^k} = t^2 \tilde{g}^{kj} (y) y_j y_k  = t^2 \hat{g}^{kj} (t^{-1}z) z_j z_k = t^2 \hat{g}^{kj} (0) z_j z_k + \cO(t), 
\ee
where $\tilde{g}$ is the metric in $y$ coordinates and $\hat{g}$ is the metric in $z$-coordinates. 
The  integral (\ref{eq_crit_point_sum_integral}) near  the critical point  $p$ for large $t$ evaluates into
\be\nn
 \begin{split}
I_V &= \sum_{p\in \hbox{Crit}(W)} \frac{(-1)^{\g(p)}  t^{2n}}{(2\pi)^n} \int_{\mathbb{R}^{2n}} d^{2n} z\;  (\det \hat{g})^{-\frac12}\;  \exp \left(-\frac12 t^2 \hat{g}^{kj} (0) z_j z_k +\cO(t) \right)  \\
 &=  \sum_{p\in \hbox{Crit}(W)} \frac{ (-1)^{\g(p)} t^{2n}}{(2\pi)^n}  \left( \frac{2\pi}{t^2}\right)^n (\det \hat{g}(0))^{-\frac12}  \frac{1}{\sqrt{\det \hat{g}^{-1}}(0)} (1+\cO(t^{-1})) \\
 &= \sum_{p\in \hbox{Crit}(W)} (-1)^{\g(p)} + \cO(t^{-1}).
 \end{split}
\ee
The $t \to \infty$ limit removes the $\cO(t^{-1})$ corrections and we arrive into (\ref{eq_Bv_integral_limit_weighted_sum}), so the proof is complete.
\end{proof}

\begin{Remark} We can modify the lemma \ref{lemma_bv_localiz} if we use the vector field $v$ with isolated critical points instead of the Morse vector field $g^{-1}(dW,\cdot)$. The corresponding Lagrangian submanifold is associated with the 
odd function $V = \psi^j g_{jk} v^k(x)$. The result of the BV integration in (\ref{eq_Bv_integral_limit_weighted_sum}) over this Lagrangian submanifold equals the sum over zeros of the vector field, weighted by the corresponding indices. 
The sum is identical to the sum in the Poincare-Hopf theorem, representing the Euler characteristics of $X$.
\end{Remark}

\section{Proof of the Chern-Gauss-Bonnet theorem}

 The first step of the proof is to use the lemma \ref{lemma_superspace_integral_rep} and turn the Euler class integral over $X$ into the supermanifold integral over $M = T[1]X \oplus T[1]X$.
The next step is to use the lemma \ref{lemma_part_funct_rep} to represent the supermanifold integral as a supersymmetric partition function with on-shell supersymmetry. 
The third step is to rewrite (\ref{eq_susy_part_function_trivial_bv_integral}) the supersymmetric partition function as a BV integral over the trivial  Lagrangian submanifold for the  BV action (\ref{eq_BV_action_curvature_model}). In particular, the  proposition \ref{prop_quantum_master_solution} implies that the 
 on-shell supersymmetric theory with action (\ref{eq_4fermion_action}) is a quadratic type of BV supersymmetric theory in refined classification from \cite{Losev:2023gsq}. 
 
The key result in BV formalism is that the BV integral for the solution of the quantum master equation is invariant under the deformations of the Lagrangian submanifold, formalized by the theorem \ref{thm_bv_invariance}.
 The lemma \ref{lemma_bv_localiz} describes the BV integral over limiting Lagrangian submanifold, defined by the Morse function $W$ on $X$ and expresses the final result in terms of Morse index for the function $W$. The Morse theorem \ref{thm_Morse} implies a relation between the weighted sum in (\ref{eq_Bv_integral_limit_weighted_sum}) and the Euler characteristic of $X$ and completes the proof.

\section*{Acknowledgments}

V.L. is grateful to Yasha Neiman, Andrey Losev, and Pavel Mnev for the useful discussions on the material of this paper.  The work of V.L. is supported by the Arnold Fellowship at the London Institute for Mathematical Sciences.

\bibliography{gauss_ref}{}
\bibliographystyle{utphys}

\end{document}